\journalname{Social Network Analysis and Mining}
\begin{document}

\title{HellRank: A Hellinger-based Centrality Measure \\for Bipartite Social Networks}


\author{Seyed Mohammad Taheri, \\
Hamidreza Mahyar, \\ Mohammad Firouzi, \\ Elahe Ghalebi K., \\ Radu Grosu, \\ Ali Movaghar 
}


\institute{S.M. Taheri, M. Firouzi, A. Movaghar \at
              Department of Computer Engineering, Sharif University of Technology (SUT) \\
              \email{\{mtaheri, mfirouzi\}@ce.sharif.edu, \\movaghar@sharif.edu}           
           \and
           H. Mahyar, E. Ghalebi K., R. Grosu \at
              Department of Computer Engineering, Vienna University of Technology (TU Wien) \\
              \email{\{hmahyar, eghalebi\}@cps.tuwien.ac.at, \\radu.grosu@tuwien.ac.at}
}

\date{Received: 2016-10 / Accepted: date}

\maketitle

\begin{abstract}
Measuring centrality in a social network, especially in bipartite mode, poses many challenges. For example, the requirement of full knowledge of the network topology, and the lack of properly detecting top-$k$ \textit{behavioral representative users}. To overcome the above mentioned challenges, we propose HellRank, an accurate centrality measure for identifying central nodes in bipartite social networks. HellRank is based on the Hellinger distance between two nodes on the same side of a bipartite network. We theoretically analyze the impact of this distance on a bipartite network and find upper and lower bounds for it. The computation of the HellRank centrality measure can be distributed, by letting each node uses local information only on its immediate neighbors. Consequently, one does not need a central entity that has full knowledge of the network topological structure. We experimentally evaluate the performance of the HellRank measure in correlation with other centrality measures on real-world networks. The results show partial ranking similarity between the HellRank and the other conventional metrics according to the Kendall and Spearman rank correlation coefficient.
\keywords{Bipartite Social Networks\and Top-$k$ Central Nodes\and  Hellinger Distance \and Recommender Systems}
\end{abstract}

\section{Introduction}
Social networks have become a very important social structure of our modern society with hundreds of millions of users nowadays. With the growth of information spread across various social networks, the question of \lq\lq how to measure the \textit{relative importance} of users in a social network?\rq\rq\ has become increasingly challenging and interesting, as important users are more likely to be infected by, or to infect, a large number of users. Understanding users' behaviors when they connect to social networking sites creates opportunities for richer studies of social interactions. Also, finding a subset of users to statistically represent the original social network is a fundamental issue in social network analysis. This small subset of users (the behaviorally-representative users) usually plays an important role in influencing the social dynamics on behavior and structure.

The centrality measure is widely used in social network analysis to quantify the \textit{relative importance} of nodes within a network. The most central nodes are often the nodes that have more weight, both in terms of the number of interactions as well as the number of connections to other nodes \cite{Silva2013}. In social network analysis, such a centrality notion is used to identify influential users \cite{Mahyar2015TopK,Silva2013,Wei2013,Yustiawan2015}, as the influence of a user is the ability to popularize a particular content in the social network. To this end, various centrality measures have been proposed over the years to rank the network nodes according to their topological and structural properties \cite{Cha2010,Friedkin2011,Zhao2013}. These measures can be considered as several points of view with different computational complexity, ranging from low-cost measures (e.g., Degree centrality) to more costly measures (e.g., Betweenness and Closeness centralities) \cite{Wehmuth2013,Technology2015}. The authors of \cite{Stephenson1989} concluded that centrality may not be restricted to shortest paths. In general, the global topological structure of many networks is initially unknown. However, all these structural network metrics require full knowledge of the network topology \cite{Wehmuth2013,Mahyar2015CScomdet}.

\begin{figure}[t]
  \centering
\includegraphics[width=3in]{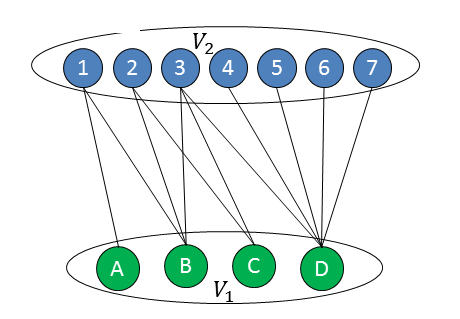} 
\vspace{-5pt}
 \caption{\small Bipartite graph $G=(V_1,V_2,E)$ with two different nodes set $V_1= \{A, B, C, D\}$, $ V_2= \{1,2,3,4,5,6,7\}$ and link set $E$ that each link connects a node in $V_1$ to a node in $V_2$.}
\label{fig:bipNet}
\vspace{-10pt}
\end{figure}

An interesting observation is that many real-world social networks have a bi-modal nature that allows the network to be modeled as a bipartite graph (see Figure \ref{fig:bipNet}). In a bipartite network, there are two types of nodes and the links can only connect nodes of different types \cite{Zhao2013}. The \textit{Social Recommender System} is one of the most important systems that can be modeled as a bipartite graph with users and items as the two types of nodes, respectively. In such systems, the centrality measure can have different interpretations from conventional centrality measures such as Betweenness, Closeness, Degree, and PageRank \cite{Kitsak2010}. The structural metrics, such as Betweenness and Closeness centrality, are known as the most common central nodes' identifier in one-mode networks, although in bipartite social networks they are not usually appropriate in identifying central users that are perfect representative for the bipartite network structure. For example, in a social recommender system \cite{mahyar2017recommender,taheri2017recommender}, that can be modeled by the network graph in Figure \ref{fig:bipNet}, user $D\in V_1$ is associated with items that have too low connections and have been considered less often by other users; meanwhile user $D$ is considered as the most central node based on these common centrality metrics, because it has more connections. However user $B \in V_1$ is much more a real representative than $D$ in the network. In the real-world example of an online store, if one user buys a lot of goods, but these goods are low consumption, and another buys fewer goods, but these are widely, we treat the second user as being a synechdochic representative of all users of the store. This is quite different from a conventional centrality metric outcome. 

Another interesting observation is that the common centrality measures are typically defined for non-bipartite networks. To use these measures in bipartite networks, different projection methods have been introduced to converting bipartite to monopartite networks \cite{Zhou2007,Sawant2013}. In these methods, a bipartite network is projected by considering one of the two node sets and if each pair of these nodes shares a neighbor in the network, two nodes will be connected in the projected one-mode network \cite{Latapy2008,Liebig2014}. One of the major challenges is that every link in a real network is formed independently, but this does not happen in the projected one-mode network. Because of lack of independency in the formation of links in the projected network, analysis of the metrics that use the random network \cite{Erdos1959} as a basis for their approach, is difficult. Classic random networks are formed by assuming that links are being independent from each other \cite{Opsahl2013}. The second challenge is that the projected bipartite network nodes tend to form \textit{Cliques}. A clique is a fully connected subset of nodes that all of its members are neighbors. As a result, the metrics that are based on triangles (i.e., a clique on three nodes) in the network, can be inefficient (such as structural holes or clustering coefficient measures) \cite{Wasserman1994,Opsahl2013}.

Despite the fact that the projected one-mode network is less informative than its corresponding bipartite representation, some of the measures for monopartite networks have been extended to bipartite mode \cite{Kitsak2011,Opsahl2013}. Moreover, because of requirement of full knowledge of network topology and lack of proper measure for detection of more behavioral representative users in bipartite social networks, the use of conventional centrality measures in the large-scale networks (e.g. in recommender systems) is a challenging issue. In order to overcome the aforementioned challenges and retain the original information in bipartite networks, proposing an accurate centrality measure in such networks seems essential \cite{Armour2014a,Liebig2014}.

Motivated by these observations and taking into account users' importance indicators for detection of central nodes in social recommender systems, we introduce a new centrality measure, called HellRank. This measure identifies central nodes in bipartite social networks. HellRank is based on the \textit{Hellinger distance}\cite{Nikulin2001}, a type of \textit{f-divergence} measure, that indicates structural similarity of each node to other network nodes. Hence, this distance-based measure is accurate for detecting the more behavioral representative nodes. We empirically show that nodes with high HellRank centrality measure have relatively high Degree, Betweenness and PageRank centrality measures in bipartite networks. In the proposed measure, despite of different objectives to identify central nodes, there is a partial correlation between HellRank and other common metrics.

The rest of the paper is organized as follows. In Section 2, we discuss related work on behavioral representative and influence identification mechanisms. We also discuss centrality measures for bipartite networks, and highlight the research gap between our objectives and previous work. In Section 3, we introduce our proposed measure to solve the problem of centrality in bipartite networks. Experimental results and discussions are presented in Section 4. We conclude our work and discuss the future works in Section 5.

\section{Related Work}

We organize the relevant studies on social influence analysis and the problem of important users in three different categories. First, in Section \ref{behavRepUsBip}, we study existing work on behavioral representative users detection methods in social networks. Second, in Section \ref{mecInflBip}, we review previous mechanisms for identifying influential users in social networks by considering the influence as a measure of the relative importance. Third, in Section \ref{centMeasBip}, we focus in more details on centrality measures for bipartite networks. 

\subsection{Behavioral Representative Users Detection}\label{behavRepUsBip}
Unlike influence maximization, in which the goal is to find a set of nodes in a social network who can maximize the spread of influence \cite{chen2009efficient,kempe2003maximizing}, the objective of behavioral representative users detection is to identify a few average users who can statistically represent the characteristics of all users \cite{landauer1988research}. Another type of related work is social influence analysis. 
\cite{anagnostopoulos2008influence} and \cite{singla2008yes} proposed methods to qualitatively measure the existence of influence. \cite{crandall2008feedback} studied the correlation between social similarity and influence. \cite{tang2009social}  presented a method for measuring the strength of such influence. The problem of sampling representative users from social networks is also relevant to graph sampling \cite{leskovec2006sampling,maiya2011benefits,ugander2013graph}. \cite{zhu2007improving} introduced a novel ranking algorithm called GRASSHOPPER, which ranks items with an emphasis on diversity. Their algorithm is based on random walks in an absorbing Markov chain. \cite{benevenuto2009characterizing} presented a comprehensive view of user behavior by characterizing the type, frequency, and sequence of activities users engage in and described representative user behaviors in online social networks based on clickstream data. \cite{giroire2008cubicle} found significant diversity in end-host behavior across environments for many features, thus indicating that profiles computed for a user in one environment yield inaccurate representations of the same user in a different environment. \cite{maia2008identifying} proposed a methodology for characterizing and identifying user behaviors in online social networks.

However, most existing work focused on studying the network topology and ignored the topic information. \cite{sun2013learning} aimed to find representative users from the information spreading perspective and \cite{ahmed2014network} studied the network sampling problem in the dynamic environment. \cite{papagelis2013sampling}  presented a sampling-based algorithm to efficiently explore a user's ego network and to quickly approximate quantities of interest. \cite{davoodi2012social} focused on the use of the social structure of the user community, user profiles and previous behaviors, as an additional source of information in building recommender systems. \cite{tang2015sampling} presented a formal definition of the problem of sampling representative users from social network.

\subsection{Identifying Influential Users}\label{mecInflBip}

\cite{Goyal2010} studied how to infer social probabilities of influence by developing an algorithm to scan over the log of actions of social network users using real data. \cite{Bharathi2007,Tang2009} focused on the influence maximization problem to model the social influence on large networks. TwitterRank, as an extension of PageRank metric, was proposed by \cite{Weng2010} to identify influential users in Twitter. \cite{Chen2013} used the Susceptible-Infected-Recovered (SIR) model to examine the spreading influence of the nodes ranked by different influence measures. \cite{Xu2012} identified influencers using joint influence powers through Influence network. \cite{Zhu2013} identified influencial users by using user trust networks. \cite{Li2014} proposed the weighted LeaderRank technique by replacing the standard random walk to a biased random walk. \cite{sun2014sparsity} presented a novel analysis on the statistical simplex as a manifold with boundary and applied the proposed technique to social network analysis to rank a subset of influencer nodes. \cite{Tang2012} proposed a new approach to incorporate users' reply relationship, conversation content and response immediacy to identify influential users of online health care community. \cite{Du2014} used multi-attribute and homophily characteristics in a new method to identify influential nodes in complex networks.

In the specific area of identifying influential users in bipartite networks, \cite{Beguerisse-Diaz2009} presented a dynamical model for rewiring in bipartite networks and obtained time-dependent degree distributions. \cite{Liebig2014} defined a bipartite clustering coefficient by taking differently structured clusters into account, that can find important nodes across communities. The concept of clustering coefficient will be discussed in further detail in the Section \ref{CC}.

\subsubsection{Clustering Coefficient}\label{CC}
This measure shows the nodes' tendency to form clusters and has attracted a lot of attention in both empirical and theoretical work. In many real-world networks, especially social networks, nodes are inclined to cluster in densely connected groups \cite{Opsahl2009}. Many measures have been proposed to examine this tendency. In particular, the global clustering coefficient provides an overall assessment of clustering in the network \cite{Luce1949}, and the local clustering coefficient evaluates the clustering value of the immediate neighbors of a node in the network \cite{Watts1998}.

The global clustering coefficient is the fraction of 2-paths (i.e., three nodes connected by two links) that are closed by the presence of a link between the first and the third node in the network. The local clustering coefficient is the fraction of the links among a node's interactions over the maximum possible number of links between them \cite{Watts1998,Opsahl2013}. 

Due to structural differences, applying these general clustering coefficients directly to a bipartite network, is clearly not appropriate \cite{Borgatti1997}, Thus the common metrics were extended or redefined, and different clustering measures were defined in these networks. In one of the most common clustering coefficients in bipartite networks, 4-period density is measured instead of triangles \cite{Robins2004,Zhang2008}. However, this measure could not consider the triple closure concept in the clustering as it actually consists of two nodes. This kind of measure can only be a measure of the level of support between two nodes rather than the clustering of a group of nodes. Accordingly, \cite{Latapy2008}  defined the notion of clustering coefficient for pairs of nodes capturing correlations between neighborhoods in the bipartite case. Addtionally, \cite{Opsahl2013} considered the factor, $C^*$, as the ratio of the number of closed 4-paths ($\tau^*_\Delta$) to the number of 4-paths ($\tau^*$), as:
\begin{eqnarray}
\centering
C^* =\frac{ {\rm closed\ 4\textendash paths}}{{\rm 4\textendash paths}}=\frac{\tau^*_\Delta}{\tau^*}
\end{eqnarray}

\subsection{Centrality Measures for Bipartite Networks}\label{centMeasBip}
Various definitions for centrality have been proposed in which centrality of a node in a network is generally interpreted as the relative importance of that node \cite{freeman1978centrality,Chen2013}. Centrality measures has attracted a lot of attentions as a tool to analyze various kinds of networks (e.g. social, information, and biological networks) \cite{Faust1997,Kang2011}. In this section, we consider a set of well-known centrality measures including Degree, Closeness, Betweenness, Eigenvector and PageRank, all of them redefined for bipartite networks. Given bipartite network $G=(V_1,V_2,E)$, where $V_1$ and $V_2$ are the two sides of network with $|V_1|=n_1$ and $|V_2|=n_2$. The link set $E$ includes all links connecting nodes of $V_1$ to nodes of $V_2$. For the network in Figure \ref{fig:bipNet}, $n_1$ and $n_2$ are equal to 4 and 7, respectively. Let $Adj$ be the adjacency matrix of this network, as shown below:
\begin{eqnarray}
\centering
 Adj(G)=
\begin{blockarray}{cccccccc}
&1 & 2 & 3 & 4 & 5 & 6 & 7\\
\begin{block}{c(ccccccc)}
  A & 1 & 0 & 0 & 0 & 0 & 0 & 0 \\
  B & 1 & 1 & 1 & 0 & 0 & 0 & 0 \\
  C & 0 & 1 & 1 & 0 & 0 & 0 & 0 \\
  D & 0 & 0 & 1 & 1 & 1 & 1 & 1 \\
\end{block}
\end{blockarray}
\end{eqnarray}

\subsubsection{Degree Centrality}\label{Degree}
In one-mode graphs, Degree centrality of node $i$, $d_i$, is equal to the number of connections of that node. In bipartite graphs, it indicates the number of node's connections to members on the other side. For easier comparison, Degree centrality is normalized: the degree of each node is divided by the size of the other node set. Let $d_i^*$ be the normalized Degree centrality of node $i$. This is equal to \cite{Borgatti1997,Faust1997}:
\begin{eqnarray}
d^*_i = \frac{d_i}{n_2} ,\ d^*_j = \frac{d_j}{n_1};\ \ \  i\in V_1,  j\in V_2
\end{eqnarray}

As the network size becomes increasingly large, employing Degree centrality is the best option \cite{You2015}. On the other hand, this centrality is based on a highly local view around each node. As a consequence, we need more informative measures that can further distinguish among nodes with the same degree \cite{Kang2011}.

\subsubsection{Closeness Centrality}\label{Closeness}
The standard definition of Closeness $c_i$ for node $i$ in monopartite networks, refers to the sum of geodesic distances from node $i$ to all $n-1$ other nodes in the network with $n$ nodes \cite{sabidussi1966centrality}. For bipartite networks, this measure can be calculated using the same approach, but the main difference is normalization. Let $c_i^*$ be the normalized Closeness centrality of node $i \in V_1$. This is equal to \cite{Borgatti1997,Faust1997}:
\begin{eqnarray}
c_i^*= \frac{n_2+2(n_1-1)}{c_i};\ \ \  i\in V_1\\c_j^*= \frac{n_1+2(n_2-1)}{c_j};\ \ \   j\in V_2 
\end{eqnarray}

For the bipartite network shown in Figure \ref{fig:bipNet}, normalized Closeness centrality of the nodes A, B, C and D are respectively equal to 0.35, 0.61, 0.52 and 0.68. It specifies that node D is the most central node which says that Closeness centrality cannot help us very much in the objective to finding more behavioral representative nodes in bipartite social networks.

\subsubsection{Betweenness Centrality}\label{Betweenness}
Betweenness centrality of node $i$, $b_i$, refers to the fraction of shortest paths in the network that pass through node $i$ \cite{freeman1977set}. In bipartite networks, maximum possible Betweenness for each node is limited by relative size of two nodes sets, as introduced by \cite{Borgatti2011}:
\begin{eqnarray}
b_{\max}(V_1)&=&\frac{1}{2}[n_2^2(s+1)^2 + \nonumber \\
&\ & n_2(s+1)(2t-s-1)-t(2s-t+3)]
\end{eqnarray}

\noindent where $s=(n_1-1)\  div \ n_2$ and $t=(n_1-1)\  mod \ n_2$; and
\begin{eqnarray}
b_{\max}(V_2)&=&\frac{1}{2}[n_1^2(p+1)^2 + \nonumber \\
&\ & n_1(p+1)(2r-p-1)-r(2p-r+3)]
\end{eqnarray}

\noindent where $p=(n_2-1)\  div \ n_1$ and $r=(n_2-1)\  mod \ n_1$.

For the bipartite network shown in Figure \ref{fig:bipNet}, normalized Betweenness centrality of the nodes A, B, C and D are respectively equal to 0, 0.45, 0.71 and 0.71. It specifies that nodes C and D are the most central nodes which says that Betweenness centrality cannot help us much objective in finding more behavioral representative nodes in bipartite social networks.

\subsubsection{Eigenvector and PageRank Centrality}\label{PageRank}
Another important centrality measure is Eigenvector centrality, which is defined as the principal eigenvector of adjacency matrix of the network. A node's score is proportional to the sum of the scores of its immediate neighbors. This measures exploits the idea that nodes with connections to high-score nodes are more central \cite{Bonacich1972}. The eigenvector centrality of node $i$, $e_i$, is defined as follows \cite{Faust1997}:
\begin{eqnarray}
e_i = \lambda \sum a_{ij}e_j
\end{eqnarray}
where $Adj(G)=(a_{ij})_{i,j=1}^n$ denotes the adjacency matrix of the network with $n$ nodes and $\lambda$ is the principal eigenvalue of the adjacency matrix. Our interest regarding to the eigenvector centrality is particularly focused on the distributed computation of PageRank \cite{You2015}. The PageRank is a special case of eigenvector centrality when the adjacency matrix is suitably normalized to obtain a column stochastic matrix \cite{You2015}. The PageRank vector $R=(r_1,r_2,\dots,r_n)^T$ is the solution of the following equation:
\begin{eqnarray}
R=\frac{1-d}{n}.1 + dLR 
\end{eqnarray}
where $r_i$ is the PageRank of node $i$ and $n$ is the total number of nodes. $d$ is a damping factor, set to around 0.85 and $L$ is a modified adjacency matrix, such that $l_{i,j} = 0$ if and only if node $j$ does not have a link to $i$ and $\sum_{i=1}^n l_{i,j}=1$, where $l_{i,j}=\frac{a_{i,j}}{d_j}$, and $d_j=\sum_{i=1}^n a_{i,j}$ is the out-degree of node $j$ \cite{Okamoto2008}. 

For the bipartite network shown in Figure \ref{fig:bipNet}, normalized PageRank centrality of the nodes A, B, C and D are respectively equal to 0.05, 0.11, 0.08 and 0.21. It specifies that node D is the most central node which says that PageRank centrality cannot help us much objective to finding more behavioral representative nodes in bipartite social networks.

\section{Proposed Method}
In this paper, we want to identify more behavioral representative nodes in bipartite social networks. To this end, we propose a new similarity-based centrality measure, called HellRank. Since the similarity measure is usually inverse of the distance metrics, we first choose a suitable distance measure, namely Hellinger distance (Section \ref{choose_metric}). Then we apply this metric to bipartite networks. After that, we theoretically analyze the impact of the distance metric in the bipartite networks. Next, we generate a distance matrix on one side of the network. Finally, we compute the HellRank score of each node, accordingly to this matrix. As a result, the nodes with high HellRank centrality are more behavioral representative nodes in bipartite social networks.

\subsection{Select A Well-Defined Distance Metric}\label{choose_metric}
When we want to choose a base metric, an important point is whether this measure is based on a well-defined mathematical metric. We want to introduce a similarity-based measure for each pair of nodes in the network. So, we choose a proper distance measure as base metric, because the similarity measures are in some sense the inverse of the distance metrics. A true distance metric must have several main characteristics. A metric with these characteristics on a space induces topological properties (like open and closed sets). It leads to the study of more abstract topological spaces. \cite{hunter2012} introduced the following definition for a distance metric.

\begin{definition}\label{well-defined metric}
A metric space is a set $X$ that has a notion of the \textit{distance function} $d(x, y)$ between every pair of points $x, y \in X$. A \textit{well-defined distance metric} $d$ on a set $X$ is a function $d : X \times X \rightarrow {\rm I\!R}$ such that for all $x, y,z \in X$, three properties hold:
\begin{enumerate}
\item \textit{Positive Definiteness}: $d(x, y) \geq 0$ and $d(x, y) = 0$ if and only if $x = y$;
\item \textit{Symmetry}: $d(x, y) = d(y, x)$;
\item \textit{Triangle Inequality}: $d(x,y) \leq d(x,z) + d(z,y)$.
\end{enumerate}
\end{definition}

We define our distance function as the difference between probability distribution for each pair of nodes based on f\textnormal{-}divergence function, which is defined by \cite{csiszar2004}:
\begin{definition}
An \textit{f\textnormal{-}divergence} is a function $D_f(P||Q) $ that measures the difference between two probability distributions $P$ and $Q$. For a convex function $f$ with $f(1) = 0$, the \textit{f\textnormal{-}divergence} of $Q$ from $P$ is defined as:
\begin{eqnarray}
D_f(P||Q)&=& \int_{\Omega} {f(\frac{dP}{dQ})dQ}
\end{eqnarray}
\end{definition}
where $\Omega$ is a sample space, which is the set of all possible outcomes.
 
In this paper, we use one type of the \textit{f-divergence} metric, called \textbf{Hellinger distance} (aka \textit{Bhattacharyya distance}), that was introduced by \textit{Ernst Hellinger} in 1909 \cite{Nikulin2001}. In probability theory and information theory, \textbf{Kullback-Leibler divergence} \cite{Kullback1951} is a more common measure of difference between two probability distributions, however it does not satisfy both the symmetry and the triangle inequality conditions \cite{VanderVaart1998}. Thus, this measure is not intuitively appropriate to explain similarity in our problem. As a result, we choose Hellinger distance to quantify the similarity between two probability distributions \cite{VanderVaart1998}. For two discrete probability distributions $P=(p_1, \ldots, p_m) $ and $Q=(q_1, \ldots, q_m)$, in which $m$ is length of the vectors, Hellinger distance is defined as:
\begin{eqnarray}
D_H(P||Q)&=& \frac{1}{\sqrt{2}} \sqrt{\sum_{i=1}^{m}(\sqrt{p_i}-\sqrt{q_i})^2}
\end{eqnarray}

It is obviously related to the Euclidean norm of the difference of the square root of vectors, as:
\begin{eqnarray}
D_H(P||Q)&=& \frac{1}{\sqrt{2}} \Vert{\sqrt{P}-\sqrt{Q}}\Vert_2
\end{eqnarray}

\subsection{Applying Hellinger distance in Bipartite Networks}
In this Section, we want to apply the Hellinger distance to a bipartite network for measuring the similarity of the nodes on one side of the network. Assume $x$ is a node in a bipartite network in which its neighborhood is $N(x)$ and its degree is $deg(x)=|N(x)|$. Suppose that the greatest node degree of the network is $\Delta$. Let $l_i$ be the number of $x$'s neighbors with degree of $i$. Suppose the vector $L_x=(l_1,\dots,l_{\Delta})$ be the non-normalized distribution of $l_i$ for all adjacent neighbors of $x$. Now, we introduce the Hellinger distance between two nodes $x$ and $y$ on one side of the bipartite network as follows:  
\begin{eqnarray}
d(x,y) = \sqrt{2}\ D_H (L_x\|L_y)
\end{eqnarray}
The function $d(x,y)$ represents the difference between two probability distribution of $L_x$ and $L_y$. To the best of our knowledge, this is the first work that introduces the Hellinger distance between each pair of nodes in a bipartite network, using degree distribution of neighbors of each node.

\subsection{Theoretical Analysis}
In this Section, we first express the Hellinger distance for all positive real vectors to show that applying this distance to bipartite networks still satisfies its metricity (lemma \ref{positive_metric}) according to Definition \ref{well-defined metric}. Then, we find an upper and a lower bound for the Hellinger distance between two nodes of bipartite network.

\begin{lemma}\label{positive_metric}
Hellinger distance for all positive real vectors is a well-defined distance metric function.
\end{lemma}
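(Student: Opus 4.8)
The plan is to reduce everything to the Euclidean norm by using the identity already recorded in the excerpt, namely $D_H(P\Vert Q) = \frac{1}{\sqrt{2}}\Vert \sqrt{P}-\sqrt{Q}\Vert_2$, where $\sqrt{P}=(\sqrt{p_1},\dots,\sqrt{p_m})$ is taken componentwise. The observation that drives the whole argument is that $D_H$ is nothing but the \emph{pullback} of the ordinary Euclidean metric under the coordinatewise square-root map. Since the Euclidean norm already induces a genuine distance on $\mathbb{R}^m$ that satisfies all three axioms of Definition \ref{well-defined metric}, the only work left is to transport those axioms across the square-root substitution, and to check that none of them depended on the vectors being normalized (the point of the lemma is precisely to drop the probability-simplex constraint and allow arbitrary positive real vectors).

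Concretely, I would introduce the map $\phi(P)=\sqrt{P}$, defined componentwise on the set of vectors with nonnegative entries, so that $d(P,Q)=\frac{1}{\sqrt{2}}\Vert \phi(P)-\phi(Q)\Vert_2$. Because $t\mapsto\sqrt{t}$ is a strictly increasing bijection of $[0,\infty)$ onto itself, $\phi$ is injective: $\phi(P)=\phi(Q)$ forces $\sqrt{p_i}=\sqrt{q_i}$, hence $p_i=q_i$, for every $i$, so $P=Q$. With $\phi$ in hand the first two properties are immediate. Symmetry follows from $\Vert \phi(P)-\phi(Q)\Vert_2=\Vert \phi(Q)-\phi(P)\Vert_2$. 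Nonnegativity is inherited directly from the norm, and the equality case $d(P,Q)=0$ holds exactly when $\phi(P)=\phi(Q)$, which by injectivity of $\phi$ is equivalent to $P=Q$; this yields positive definiteness.

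For the triangle inequality I would apply the Euclidean triangle inequality to the transformed vectors: for any third positive vector $R$,
\begin{eqnarray}
d(P,Q) &=& \frac{1}{\sqrt{2}}\Vert \phi(P)-\phi(Q)\Vert_2 \nonumber \\
&\leq& \frac{1}{\sqrt{2}}\bigl(\Vert \phi(P)-\phi(R)\Vert_2 \nonumber \\
&\ & \ \ + \Vert \phi(R)-\phi(Q)\Vert_2\bigr) \nonumber \\
&=& d(P,R) + d(R,Q),
\end{eqnarray}
which is valid for arbitrary points of $\mathbb{R}^m$ and in particular needs no normalization of $P$, $Q$, or $R$.

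The step I expect to carry the real content is the positive-definiteness clause, because it is the only one that cannot be quoted verbatim from the Euclidean norm: it is where injectivity of the square-root transformation is essential, and where one must confirm that passing from normalized distributions to arbitrary positive vectors does not break the ``only if'' direction. The remaining two axioms are purely formal consequences of the corresponding properties of $\Vert\cdot\Vert_2$, so beyond recognizing the pullback structure there is no genuine obstacle.
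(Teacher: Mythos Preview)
Your proposal is correct and follows essentially the same route as the paper: both rewrite $D_H$ as $\frac{1}{\sqrt{2}}\Vert\sqrt{P}-\sqrt{Q}\Vert_2$ and then read off the metric axioms from the Euclidean norm, in particular deriving the triangle inequality exactly as you do. Your treatment is actually more careful than the paper's, which simply asserts positive definiteness and symmetry without spelling out the injectivity of the componentwise square root; otherwise the arguments are the same.
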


\begin{proof}
Based on the true metric properties in Definition \ref{well-defined metric}, for two probability distribution vectors $P$ and $Q$, the following holds:
\begin{eqnarray}
D_H(P\|Q) &\geq&0 \\
D_H(P\|Q) &=&0 \ \ \Leftrightarrow \ P=Q \\
D_H(P\|Q)  &=& D_H(Q\|P)  
\end{eqnarray}

If we have another probability distribution $R$ similar to $P$ and $Q$, then according to the triangle inequality in norm 2, we should have: 
\begin{eqnarray}
\frac{1}{\sqrt{2}} \|{\sqrt{P}-\sqrt{Q}}\|_2&\leq&  \frac{1}{\sqrt{2}} ( \|{\sqrt{P}-\sqrt{R}}\|_2+ \|{\sqrt{R}-\sqrt{Q}}\|_2 ) \nonumber\\ 
\Rightarrow~~ D_H(P\|Q)&\leq& D_H(P\|R)+ D_H(R\|Q) 
\end{eqnarray}

It shows that the triangle inequality in Hellinger distance for all positive real vectors is a well-defined distribution metric function.
\end{proof}

Using this distance measure, we have the ability to detect differences between local structures of nodes. In other words, this distance expresses similarity between the local structures of two nodes. If we normalize the vectors (i.e., sum of the elements equals to one), then differences between local structures of nodes may not be observed. For example, there does not exist any distance between node $x$ with $deg(x)=10$ that its neighbors' degree are 2 and node $y$ with $deg(y)=1$ that its neighbor's degree are 2. Therefore, our distance measure with vectors normalization is not proper for comparing two nodes.

Then, we claim that if the difference between two nodes' degree is greater (or smaller) than a certain value, the distance between these nodes cannot be less (or more) than a certain value. In other words, their local structures cannot be similar more (or less) than a certain value. In the following theorem, we find an upper and a lower bound for the Hellinger distance between two nodes on one side of a bipartite network using their degrees' difference.

\begin{theorem}
If we have two nodes $x$ and $y$ on one side of a bipartite network, such that $deg(x)=k_1$, $deg(y)=k_2$, and $k_1 \geq k_2$, then we have a lower bound for the distance between these nodes as:
\begin{eqnarray}
d(x,y)\geq \sqrt{k_1}-\sqrt{k_2}
\end{eqnarray}
and an upper bound as: 
\begin{eqnarray}
d(x,y)\leq \sqrt{k_1+k_2}
\end{eqnarray}
\end{theorem}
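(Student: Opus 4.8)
The plan is to reduce the whole statement to a single squared-Euclidean-norm computation and then bound the resulting cross term in the two directions. First I would unwind the definitions: combining $d(x,y)=\sqrt{2}\,D_H(L_x\|L_y)$ with $D_H(P\|Q)=\frac{1}{\sqrt{2}}\|\sqrt{P}-\sqrt{Q}\|_2$, the two factors of $\sqrt{2}$ cancel and I obtain $d(x,y)=\|\sqrt{L_x}-\sqrt{L_y}\|_2$. Writing $a_i=\sqrt{l_i^{\,x}}$ and $b_i=\sqrt{l_i^{\,y}}$ for the (nonnegative) components of the two square-root vectors, the crucial structural fact is that $\sum_i a_i^2=\sum_i l_i^{\,x}=deg(x)=k_1$ and likewise $\sum_i b_i^2=k_2$, since each neighbor of $x$ is counted in exactly one degree-bin $l_i^{\,x}$. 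This partition property is the one place where the neighbor-degree construction of $L_x$ actually enters the argument. Expanding the squared norm then gives the key identity
\begin{eqnarray}
d(x,y)^2 &=& \sum_i (a_i-b_i)^2 \nonumber \\
&=& k_1+k_2-2\sum_i a_i b_i, \nonumber
\end{eqnarray}
so both bounds follow once the cross term $\sum_i a_i b_i=\langle \sqrt{L_x},\sqrt{L_y}\rangle$ is controlled.

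For the upper bound I would simply note that every $a_i b_i\geq 0$, hence $\sum_i a_i b_i\geq 0$ and $d(x,y)^2\leq k_1+k_2$, which yields $d(x,y)\leq\sqrt{k_1+k_2}$ after taking the nonnegative square root. For the lower bound I would apply the Cauchy--Schwarz inequality to the same cross term, $\sum_i a_i b_i\leq\|a\|_2\|b\|_2=\sqrt{k_1}\sqrt{k_2}$, so that $d(x,y)^2\geq k_1+k_2-2\sqrt{k_1 k_2}=(\sqrt{k_1}-\sqrt{k_2})^2$. Here I would invoke the hypothesis $k_1\geq k_2$ to guarantee $\sqrt{k_1}-\sqrt{k_2}\geq 0$, which is exactly what licenses taking square roots to conclude $d(x,y)\geq\sqrt{k_1}-\sqrt{k_2}$ rather than its absolute value.

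I do not expect a serious obstacle: once the $\sqrt{2}$ factors are seen to cancel and the component-sum identity $\sum_i l_i=deg(\cdot)$ is in hand, each inequality is essentially a single line. The only points demanding care are bookkeeping ones --- confirming that the degree bins partition the neighborhood so that the squared norms equal exactly $k_1$ and $k_2$, and tracking the sign condition $k_1\geq k_2$ so the lower bound can be stated without an absolute value. It is also worth remarking that both bounds are tight in a transparent way: equality in the upper bound forces $L_x$ and $L_y$ to have disjoint supports (no shared neighbor degree), while equality in the lower bound forces the two square-root vectors to be proportional.
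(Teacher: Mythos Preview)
Your argument is correct and is in fact cleaner than the paper's own proof. Both proofs rest on the same identity $d(x,y)^2=k_1+k_2-2\langle\sqrt{L_x},\sqrt{L_y}\rangle$ (implicitly, in the paper's case), but the paper takes a considerably longer route: for the lower bound it relaxes $L_x,L_y$ to arbitrary nonnegative real vectors with prescribed sums and sets up a Lagrange-multiplier minimization of $\sum_i(\sqrt{l_i}-\sqrt{h_i})^2$, arriving at the extremal configuration $h_i=l_i\,k_2/k_1$ and the value $(\sqrt{k_1}-\sqrt{k_2})^2$; for the upper bound it invokes the pairwise inequality $(\sqrt{p_i}-\sqrt{q_i})^2+(\sqrt{p_j}-\sqrt{q_j})^2\leq p_i+p_j+q_i+q_j$ and aggregates over indices. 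Your Cauchy--Schwarz step for the lower bound and bare nonnegativity of the cross term for the upper bound accomplish the same thing in one line each, and your remark on the equality cases (disjoint supports versus proportional square-root vectors) recovers exactly the extremal configurations that the Lagrange analysis produces, so nothing is lost by the shorter route. The only side benefit of the paper's variational setup is that it makes the continuous relaxation explicit, which matters if one wants to note that the bounds are proved over real $L_x,L_y$ and hence may not be attained by the integer-valued degree-count vectors actually arising in the network; your proof inherits the same feature without comment.
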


\begin{proof}
To prove the theorem, we use the Lagrange multipliers. Suppose $L_x=(l_1,\dots,l_{\Delta})$ and $L_y=(h_1,\dots,h_{\Delta})$ are positive real distribution vectors of nodes $x$ and $y$. Based on (13) we know $d(x,y) = \sqrt{2}\ D_H (L_x\|L_y)$, so one can minimize the distance between these nodes by solving $\min\limits_{L_x,L_y} \sqrt{2}\ D_H(L_x\|L_y)$, which is equivalent to find the minimum square of their distance:
\begin{eqnarray}
\min_{L_x,L_y} 2 D_H^2(L_x \| L_y)=\min_{L_x,L_y} \sum_{i=1}^{\Delta}(\sqrt{l_i}-\sqrt{h_i})^2 \nonumber
\end{eqnarray}
So, Lagrangian function can be defined as follows:
\begin{eqnarray}
F(L_x,L_y,\lambda_1,\lambda_2) &=&{\sum_{i=1}^{\Delta} (\sqrt {l_{i}}-\sqrt {h_{i}})^2} +\nonumber \\
&\ &\lambda_1(k_1-\sum_{i=1}^{\Delta} {l_{i}})+ \lambda_2(k_2-\sum_{i=1}^{\Delta} {h_{i}}) \nonumber
\end{eqnarray}
Then, we take the first derivative with respect to $l_i$:
\begin{eqnarray}
\frac{\partial F}{\partial l_{i}} = 1 - \frac{\sqrt{h_{i}}}{\sqrt {l_{i}}}- \lambda_1 =0 ~~~\Rightarrow~~~ h_{i} = {l_{i}}{(1-\lambda_1)^2} \nonumber
\end{eqnarray}
Due to $\sum_{i=1}^{\Delta}l_i=k_1$ and $\sum_{i=1}^{\Delta}h_i=k_2$, we have:
\begin{eqnarray}
\sum_{i=1}^{\Delta}h_{i} = k_2 \rightarrow  \sum_{i=1}^{\Delta} {l_{i}}{(1-\lambda_1)^2} = k_2 \rightarrow (1-\lambda_1)  =  \pm \sqrt{\frac{k_2}{k_1}} \nonumber
\end{eqnarray}
But in order to satisfy $\sqrt{h_i}= {\sqrt{l_i}}{(1-\lambda_1)}$, the statement $1-\lambda_1$ must be positive, thus:
\begin{eqnarray}
h_{i} ={l_{i}}{(1-\lambda_1)^2} = l_{i} \frac {k_2}{k_1}  \nonumber
\end{eqnarray}
After derivation with respect to $h_i$, we also reach similar conclusion. If this equation is true, then equality statement for minimum function will occur, as:
\begin{eqnarray}
\min_{L_x,L_y} 2 D_H^2(L_x \| L_y)&=& \sum_{i=1}^{\Delta}(\sqrt{l_i}-\sqrt{{\frac{k_2}{k_1}}}\sqrt{l_i})^2 \nonumber\\
&=&\sum_{i=1}^{\Delta}l_i(1-\sqrt{{\frac{k_2}{k_1}}})^2\nonumber\\ 
&=&(1-\sqrt{{\frac{k_2}{k_1}}})^2\sum_{i=1}^{\Delta}l_i\nonumber\\ 
&=&(1-\sqrt{{\frac{k_2}{k_1}}})^2 k_1 \ \ ~~~~~~~\Rightarrow   \nonumber\\ 
\min_{L_x,L_y} \sqrt{2}\ D_H(L_x \| L_y)&=&\sqrt{k_1}(1-\sqrt{{\frac{k_2}{k_1}}}) \nonumber \\ 
&=& \sqrt{k_1}-\sqrt{k_2}\nonumber
\end{eqnarray}

So, the lower bound for distance of any pair of nodes on one side of the bipartite network could not be less than a certain value by increasing their degrees difference.

Now, we want to find an upper bound according to Equation (19). As we know, the following statement is true for any $p_{i}, p_{j}$ and $q_{i}, q_{j}$:
\begin{eqnarray}
(\sqrt{p_i}-\sqrt{q_i})^2&+&(\sqrt{p_j}-\sqrt{q_j})^2 \nonumber\\
&\leq&(\sqrt{p_i+p_j}-0)^2+(\sqrt{q_i+q_j}-0)^2\nonumber\\
&=& p_i+p_j+q_i+q_j \nonumber
\end{eqnarray}
Suppose in our problem, $p_i=l_i$, $p_j=l_j$, and $q_i=h_i$, $q_j=h_j$, then this inequality holds for any two pairs of elements in $L_x$ and $L_y$. Eventually we have:
\begin{eqnarray}
d(x,y) \leq \sqrt{(\sqrt{k_1}-0)^2+(\sqrt{k_2}-0)^2} = \sqrt{k_1+k_2}\nonumber
\end{eqnarray}
We can conclude that it is not possible for any pair of nodes on one side of the bipartite network that their distance to be more than a certain value by increasing their degrees.
\end{proof}

As a result, we found the upper and the lower bounds for the Hellinger distance between two nodes on one side of the bipartite network using their degrees' difference.



\subsubsection{An Example with Probabilistic View}
In this example, we want to analyze the similarity among nodes based on Hellinger distance information in an artificial network. We examine how we can obtain required information for finding similar nodes to a specific node $x$ as the expected value and variance of the Hellinger distance. Suppose that in a bipartite artificial network with $\vert V_1\vert = n_1$ nodes on one side and $\vert V_2\vert = n_2$ nodes on the other side, nodes in $V_1$ is connected to nodes in $V_2$ using Erd\"{o}s-R\'{e}nyi model $G(m, p)$. In other words, there is an link with probability $p$ between two sides of the network. Distribution function $L_x$ of node $x\in V_1$ can be expressed as a Multinomial distribution-form as:
\begin{eqnarray}
P\left(l_1,\dots,l_{\Delta}|deg(x)=k \right) &=&P\left(L_x|deg(x)=k\right) \nonumber\\
 &=&\left(\begin{array}{c} k \\ l_1,\dots,l_{\Delta} \end{array}\right)\prod{P_i^{l_i}} 
\end{eqnarray}
where $P_i=\left(\begin{array}{c} n_2-1 \\ i-1 \end{array}\right)p^{i-1} (1-p)^{n_2-i}$ is a Binomial distribution probability $B(n_2, p)$ for $x$'s neighbors that their degree is equal to $i$.

According to the Central Limit Theorem \cite{johnson2004information}, Binomial distribution converges to a Poisson distribution $Pois(\lambda)$ with parameter $\lambda=(n_2-1)p$ and the assumption that $(n_2-1)p$ is fixed and $n_2$ increases. Therefore, average distribution of $P\left(L_x|deg(x)=k\right)$ will be $\mu=(kp_1,kp_2,\dots,kp_{\Delta})$. In addition, degree distribution in Erd\"{o}s-R\'{e}nyi model converges to Poisson distribution by increasing $n_1$ and $n_2$ ($\lambda=n_1 p$ for one side of network and $\lambda=n_2 p$ for another one).

The limit of average distribution of $P\left(L_x|deg(x)=k\right)$ by increasing $\Delta$, approaches $k$ times of a Poisson distribution. Thus, normalized $L_x$ vector is a Poisson distribution with parameter $\lambda=(n_2-1)p$. To find a threshold for positioning similar and closer nodes to node $x$, we must obtain expectation and variance of the Hellinger distance between $x$ and the other nodes in node set $V_1$. Before obtaining these values, we mention the following lemma to derive equal expression of Hellinger distance and difference between typical mean and geometric mean.
\begin{lemma}
Suppose two distribution probability vectors \\$P=(p_1, \ldots, p_m) $ and $Q=(q_1, \ldots, q_m)$ that $P$ is $k_1$ times of a Poisson distribution probability vector $P_1\,\sim \,\rm{Poisson}(\lambda_1)$ and $Q$ is $k_2$ times of a Poisson distribution probability vector $P_2\,\sim \,\rm{Poisson}(\lambda_2)$\footnote{Vector P=$(p_0,p_1,\dots)$ is a Poisson distribution probability vector such that the probability of the random variable with Poisson distribution being $i$ is equal to $p_i$.}. The square of Hellinger distance between $P$ and $Q$ is calculated by:
\begin{eqnarray}
D_H^2(P \| Q) = \frac{k_1+k_2}{2} - \sqrt{k_1k_2}(1-e^{-\frac{1}{2}(\sqrt{\lambda_1}-\sqrt{\lambda_2})^2})  
\end{eqnarray}
\end{lemma}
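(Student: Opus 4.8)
The plan is to work directly from the definition of the squared Hellinger distance, $D_H^2(P\|Q) = \frac{1}{2}\sum_{i}(\sqrt{p_i}-\sqrt{q_i})^2$, and simply substitute the explicit coordinates. By hypothesis $p_i = k_1\,\frac{e^{-\lambda_1}\lambda_1^{i}}{i!}$ and $q_i = k_2\,\frac{e^{-\lambda_2}\lambda_2^{i}}{i!}$, where the index $i$ ranges over the Poisson support $i=0,1,2,\dots$. Expanding the square turns the distance into three series, $D_H^2(P\|Q) = \frac12\big(\sum_i p_i + \sum_i q_i\big) - \sum_i\sqrt{p_i q_i}$, so the whole task reduces to evaluating these three sums in closed form.

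First I would dispose of the two diagonal sums. Since $P_1$ and $P_2$ are genuine Poisson probability vectors, $\sum_i \frac{e^{-\lambda_1}\lambda_1^{i}}{i!}=\sum_i \frac{e^{-\lambda_2}\lambda_2^{i}}{i!}=1$, and therefore $\sum_i p_i = k_1$ and $\sum_i q_i = k_2$. This immediately produces the $\frac{k_1+k_2}{2}$ term and isolates the cross term as the only nontrivial quantity left to handle.

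The crux is the cross term $\sum_i \sqrt{p_i q_i} = \sqrt{k_1 k_2}\,\sum_i \frac{e^{-(\lambda_1+\lambda_2)/2}(\sqrt{\lambda_1\lambda_2})^{i}}{i!}$, i.e. $\sqrt{k_1 k_2}$ times the Bhattacharyya coefficient of two Poisson laws. The key step is to pull the $i$-independent factor $e^{-(\lambda_1+\lambda_2)/2}$ out of the sum and recognize the remaining series as the exponential Taylor expansion $\sum_{i\ge 0}\frac{(\sqrt{\lambda_1\lambda_2})^{i}}{i!}=e^{\sqrt{\lambda_1\lambda_2}}$ (absolutely convergent, so the rearrangement is legitimate). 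Completing the square in the combined exponent, $-\frac{\lambda_1+\lambda_2}{2}+\sqrt{\lambda_1\lambda_2} = -\frac12(\sqrt{\lambda_1}-\sqrt{\lambda_2})^2$, collapses the cross term to $\sqrt{k_1 k_2}\,e^{-\frac12(\sqrt{\lambda_1}-\sqrt{\lambda_2})^2}$; substituting all three evaluated sums back then gives the desired closed form.

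I expect no genuine obstacle: once the cross term is identified as a scaled Bhattacharyya coefficient, the computation is short and entirely explicit, and the only place that demands care is evaluating that exponential series together with the perfect-square simplification of its exponent. As a final sanity check I would set $\lambda_1=\lambda_2$ and $k_1=k_2$, where $P=Q$ forces the distance to vanish; this degenerate case is the cleanest way to pin down the sign and coefficient of the exponential term in the final expression and to confirm that no bookkeeping error has crept into the constant.
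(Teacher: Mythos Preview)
Your approach is essentially the paper's: expand the square into diagonal and cross terms, use $\sum_i p_i = k_1$ and $\sum_i q_i = k_2$ for the diagonals, and then evaluate the cross term $\sum_i\sqrt{p_iq_i}$. The only difference is that the paper handles the cross term by quoting the known closed form $D_H^2(P_1\|P_2)=1-e^{-\frac12(\sqrt{\lambda_1}-\sqrt{\lambda_2})^2}$ for two Poisson laws (citing Torgersen), whereas you compute the Bhattacharyya coefficient directly via the exponential Taylor series; your route is slightly more self-contained, but the content is identical.

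One caveat worth recording: your computation yields $\sum_i\sqrt{p_iq_i}=\sqrt{k_1k_2}\,e^{-\frac12(\sqrt{\lambda_1}-\sqrt{\lambda_2})^2}$, hence $D_H^2(P\|Q)=\frac{k_1+k_2}{2}-\sqrt{k_1k_2}\,e^{-\frac12(\sqrt{\lambda_1}-\sqrt{\lambda_2})^2}$, \emph{not} the displayed expression with the factor $(1-e^{\cdots})$. Your own sanity check ($\lambda_1=\lambda_2$, $k_1=k_2$ must give zero) detects this, and so does the paper's own special case $\lambda_1=\lambda_2$, which correctly reads $\frac{k_1+k_2}{2}-\sqrt{k_1k_2}$. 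The discrepancy is a slip in the lemma's final line, where the Bhattacharyya coefficient $1-D_H^2(P_1\|P_2)$ has been replaced by $D_H^2(P_1\|P_2)$; your derivation is correct as written.
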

\begin{proof}
The squared Hellinger distance between two Poisson distributions $P_1$ and $P_2$ with rate parameters $\lambda_1$ and $\lambda_2$ is \cite{torgersen1991}:
 \begin{eqnarray}
D_H^2(P_1\|P_2)= 1-e^{-\frac{1}{2}(\sqrt{\lambda_1}-\sqrt{\lambda_2})^2}
\end{eqnarray}

Therefore,  the squared Hellinger distance for probability vectors $P$ and $Q$, will be equal to $(\sum_{i=1}^m p_i= k_1,\\\sum_{i=1}^m q_i=k_2)$:
\begin{eqnarray}
D_H^2(P \| Q) &=&\frac{1}{2} \sum_{i=1}^m(\sqrt{p_i}-\sqrt{q_i})^2 \nonumber\\
&=& \frac{1}{2} \sum_{i=1}^m(p_i+q_i - 2\sqrt{p_i q_i}) \nonumber\\
&=& \frac{k_1+k_2}{2} - \sqrt{k_1k_2}(1-e^{-\frac{1}{2}(\sqrt{\lambda_1}-\sqrt{\lambda_2})^2}) 
\end{eqnarray}
\end{proof}

However, in the special case of $\lambda_1=\lambda_2$, we have:
\begin{eqnarray}
D_H^2(P\|Q) = \frac{k_1+k_2}{2} - \sqrt{k_1k_2}
\end{eqnarray}
It means that the squared Hellinger distance is equal to difference between typical mean and geometric mean.\\

To calculate the second moment of distance between node $x \in V_1$ and any other nodes $z \in V_1$ in the same side of the bipartite network based on the lemma 2, we have:
\begin{eqnarray}
E_{z\in V_1}\left[d^2(x,z)\right]&=&E\left[2\ D_H^2(L_x\|L)\right]\nonumber\\
&=& \sum_{i=1}^\infty \left(\frac{e^{-n_1p}(n_1p)^i}{(n_1p)!}(k+i-2\sqrt{ki}) \right) \nonumber\\
&\simeq& \sum_{i=1}^{n_2} \left(\frac{e^{-n_1p}(n_1p)^i}{(n_1p)!}(k+i-2\sqrt{ki}) \right)\nonumber\\
&&
\end{eqnarray}
Where $L=(L_z|z \in V_1)$ and the infinite can be approximated by $n_2$ elements. Similarly, for distance expectation we have:
\begin{eqnarray}
E\left[\sqrt{2}\ D_H(L_x\|L)\right]&\simeq  \displaystyle\sum\limits_{i=1}^{n_2} \left(\frac{e^{-n_1p}(n_1p)^i}{(n_1p)!}\sqrt{(k+i-2\sqrt{ki}) }\right) \nonumber\\
&
\end{eqnarray}
In addition, variance can also be obtained based on these calculated moments:
\begin{eqnarray}
Var_{z\in V_1}\left(d(x,z)\right) &= E_{z\in V_1}\left[d^2(x,z)\right] - \left( E_{z\in V_1}\left[d(x,z)\right] \right)^2\nonumber\\
&
\end{eqnarray}

Hence, using these parameters, the required threshold for finding similar nodes to a specific node $x$, can be achieved. If we want to extend our method to more complex and realistic networks, we can assume that distribution $L_x$ is a multiple of Poisson distribution (or any other distribution) vector with parameter $\lambda_x$, in which $\lambda_x$ can be extracted by either the information about structure of the network or appropriate maximum likelihood estimation for node $x$. Therefore, the threshold will be more realistic and consistent with the structure of the real-world networks.

\subsubsection{Generalization to Weighted Bipartite Networks}
The introduced distance metric function can be extended to weighted networks. The generalized Hellinger distance between two nodes of the wighted bipartite network can be considered as:
\begin{eqnarray}
d(x,y) = \sqrt{2} D_H(W_x\|W_y)
\end{eqnarray}
where $W_x=(w'_1,\dots,w'_{\Delta})$, $w'_i = \sum\limits_{\substack{j \in N(x)\\deg(j)=i}}{w_j}$, and $w_j$ is the vector of weights on the links of the network. 

\subsection{ٍRank Prediction via HellRank}
In this Section, we propose a new Hellinger-based centrality measure, called HellRank, for the bipartite networks. Now, according to the Section 3.2, we find the Hellinger distances between any pair of nodes in each side of a bipartite network. Then we generate an $n_1\times n_1$ distance matrix ($n_1$ is the number of nodes in one side of network). The Hellinger distance matrix of $G$ shown in Figure \ref{fig:bipNet} is as follows:
\begin{eqnarray}
\centering
Hell\text{-} Matrix(G)=
 \begin{blockarray}{ccccc}
& A & B & C & D \\
\begin{block}{c(cccc)}
  A & 0 & 0.42 & 0.54 & 1 \\
  B & 0.42 & 0 & 0.12 & 0.86 \\
  C & 0.54 & 0.12 & 0 & 0.82 \\
  D & 1 & 0.86 & 0.82 & 1 \\
\end{block} 
\end{blockarray}\nonumber
 \end{eqnarray}

According to the well-defined metric features (in Section 3.1) and the ability of mapping to Euclidean space, we can cluster nodes based on their distances. It means that any pair of nodes in the matrix with a less distance can be placed in one cluster by specific neighborhood radius. By averaging inverse of elements for each row in the distance matrix, we get final similarity score (HellRank) for each node of the network, by:
\begin{eqnarray}
HellRank(x) &=& \frac {n_1} {\sum_{z \in V_1}{d(x,z)}} 
\end{eqnarray}
Let $HellRank^* (x)$ be the normalized HellRank of node $x$ that is equal to:
\begin{eqnarray}
HellRank^* (x) &=& HellRank(x) . \min_{z \in V_1}\left({HellRank(z)}\right)\nonumber
\end{eqnarray}
where $\min_{z \in V_1}\left({HellRank(z)}\right)$ is the minimum possible HellRank for each node

A similarity measure is usually (in some sense) the inverse of a distance metric: they take on small values for dissimilar nodes and large values for similar nodes. The nodes in one side with higher similarity scores represent more behavioral representation of that side of the bipartite network. In other words, these nodes are more similar than others to that side of the network. HellRank actually indicates structural similarity for each node to other network nodes. For the network shown in Figure \ref{fig:bipNet}, according to Hellinger distance matrix, normalized HellRank of nodes A, B, C, and D are respectively equal to 0.71, 1, 0.94, and 0.52. It is clear that among all of the mentioned centrality measures in Section 2.2, only HellRank considers node \lq B\rq\ as a more behavioral representative node. Hence, sorting the nodes based on their HellRank measures will have a better rank prediction for nodes of the network. The nodes with high HellRank is more similar to other nodes. In addition, we find nodes with less scores to identify very specific nodes which are probably very different from other nodes in the network. The nodes with less HellRank are very dissimilar to other nodes on that side of the bipartite network.

\section{Experimental Evaluation}
In this Section, we experimentally evaluate the performance of the proposed HellRank measure in correlation with other centrality measures on real-world networks. After summarizing datasets and evaluation metrics used in the experiments, the rest of this section addresses this goal. Finally, we present a simple example of mapping the Hellinger distance matrix to the Euclidean space to show clustering nodes based on their distances.

\subsection{Datasets}
To examine a measure for detection of central nodes in a two-mode network, South Davis women \cite{Davis1941}, is one of the most common bipartite datasets. This network has a group of women and a series of events as two sides of the network. A woman linked to an event if she presents at that event. Another data set used in the experiments is OPSAHL-collaboration network \cite{Newman2001}, which contains authorship links between authors and publications in the arXiv condensed matter Section (cond-mat)  with 16726 authors and 22015 articles. A link represents an authorship connecting an author and a paper.

\subsection {Evaluation Metrics}
One of the most popular evaluation metrics for comparison of different node ranking measures is Kendall's rank correlation coefficient ($\tau$). In fact, Kendall is non-parametric statistic that is used to measure statistical correlation between two random variables \cite{Abdi1955}:
\begin{eqnarray}
\tau = \frac{N_{<concordant \ pairs>} - N_{<discordant \ pairs>} }{\frac{1}{2} n(n-1)}
\end{eqnarray}
where $N_{<S>}$ is the size of set $S$.

Another way to evaluate ranking measures is binary vectors for detection of top-$k$ central nodes. All of vector's elements are zero by default and only top-$k$ nodes' values are equal to 1. To compare ranking vectors with the different metrics, we use Spearman's rank correlation coefficient ($\rho$) that is a non-parametric statistics to measure the correlation coefficient between two random variables \cite{Lehamn2005}:
\begin{eqnarray}
\rho = \frac{\sum_i (x_i-\overline{x})(y_i-\overline{y}) }{\sqrt{\sum_i (x_i-\overline{x})^2 \sum_i (y_i-\overline{y})^2}}
\end{eqnarray}
where $x_i$ and $y_i$ are ranked variables and $\overline{x}$ and $\overline{y}$ are mean of these variables.

\begin{figure*}[t]
\centering
\includegraphics[width=6in]{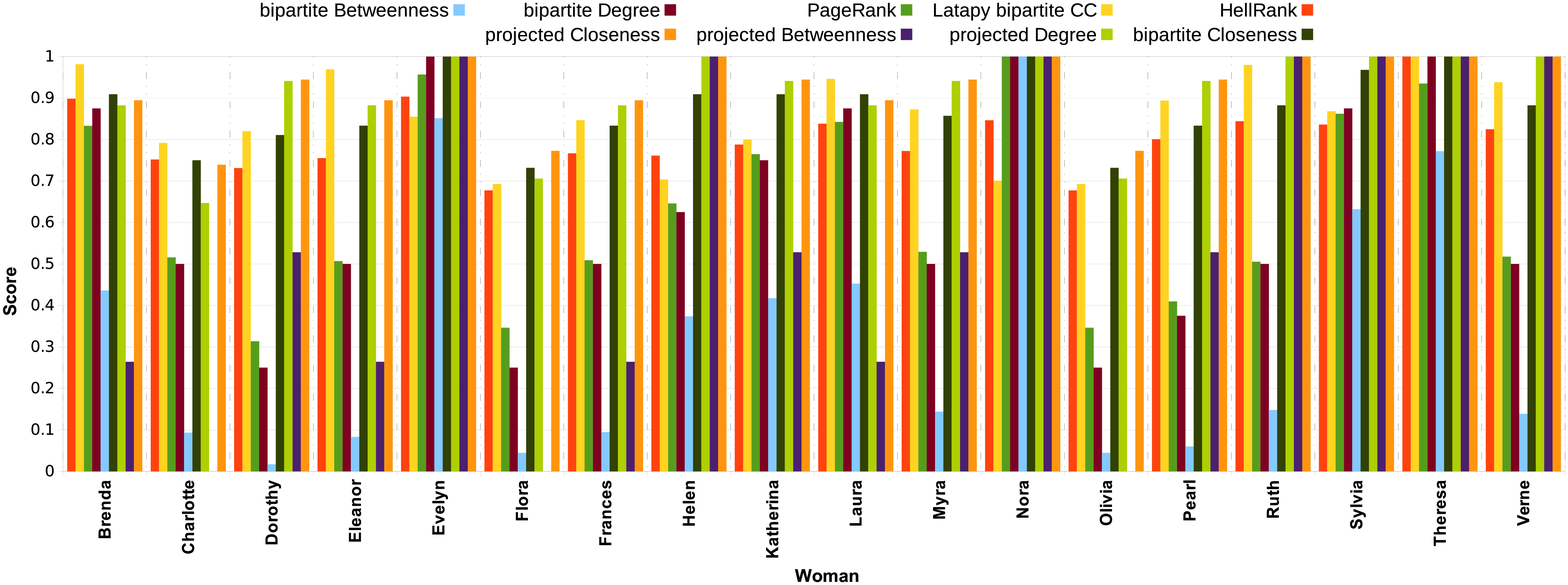}
\vspace{-3pt}
\caption{\small Comparison between rankings for all women in the Davis dataset based on various centrality metrics.}
\vspace{-5pt}
\label{fig:compares}
\end{figure*}

\begin{figure*}[!t]
\centering
\subfloat[\small HellRank vs. bipartite Betweenness]{\includegraphics[width=2.1in]{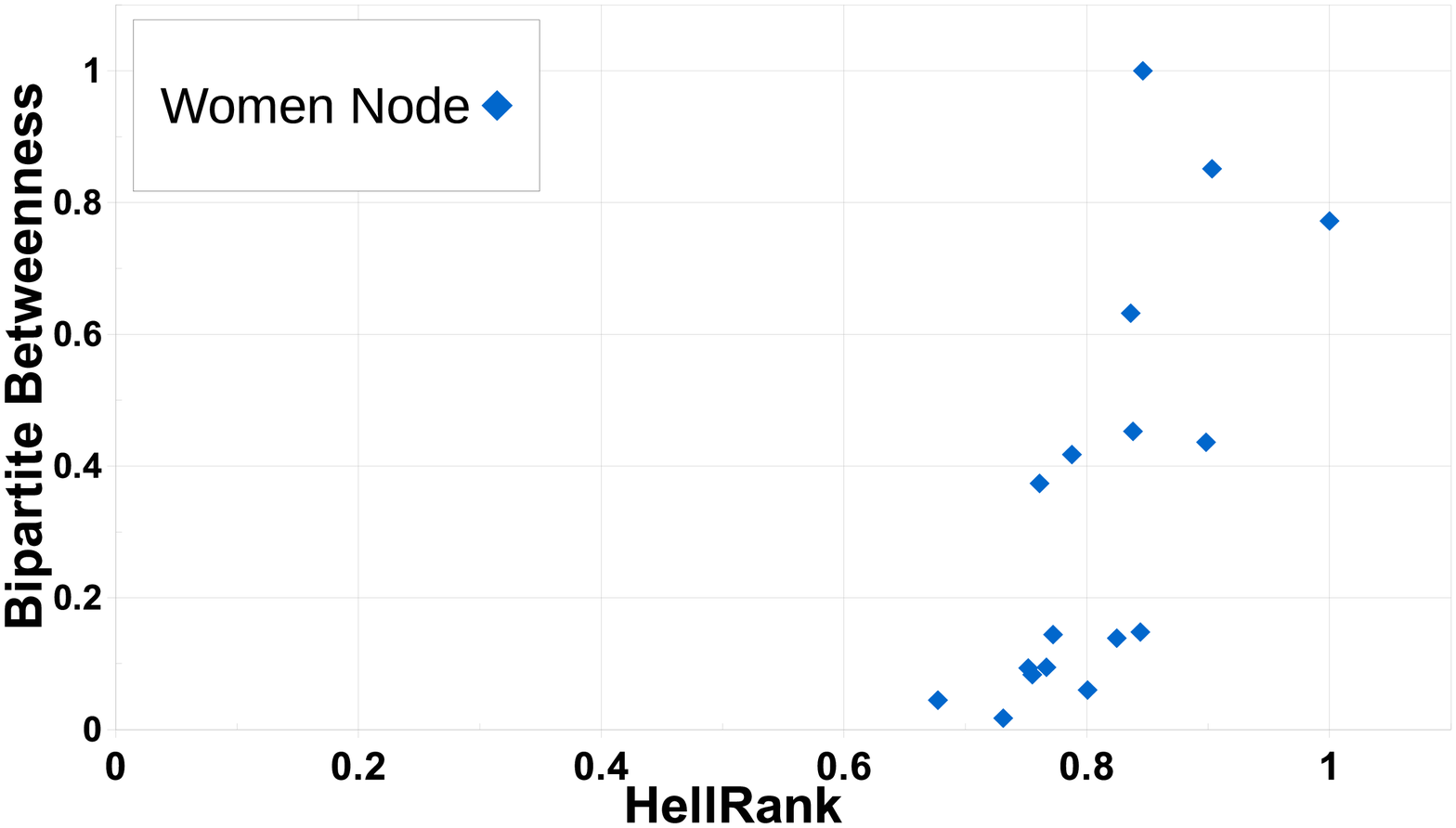}\label{davis_vs_bipartiteBetweenness}}
\hfil
\subfloat[\small HellRank vs. bipartite Degree]{\includegraphics[width=2.1in]{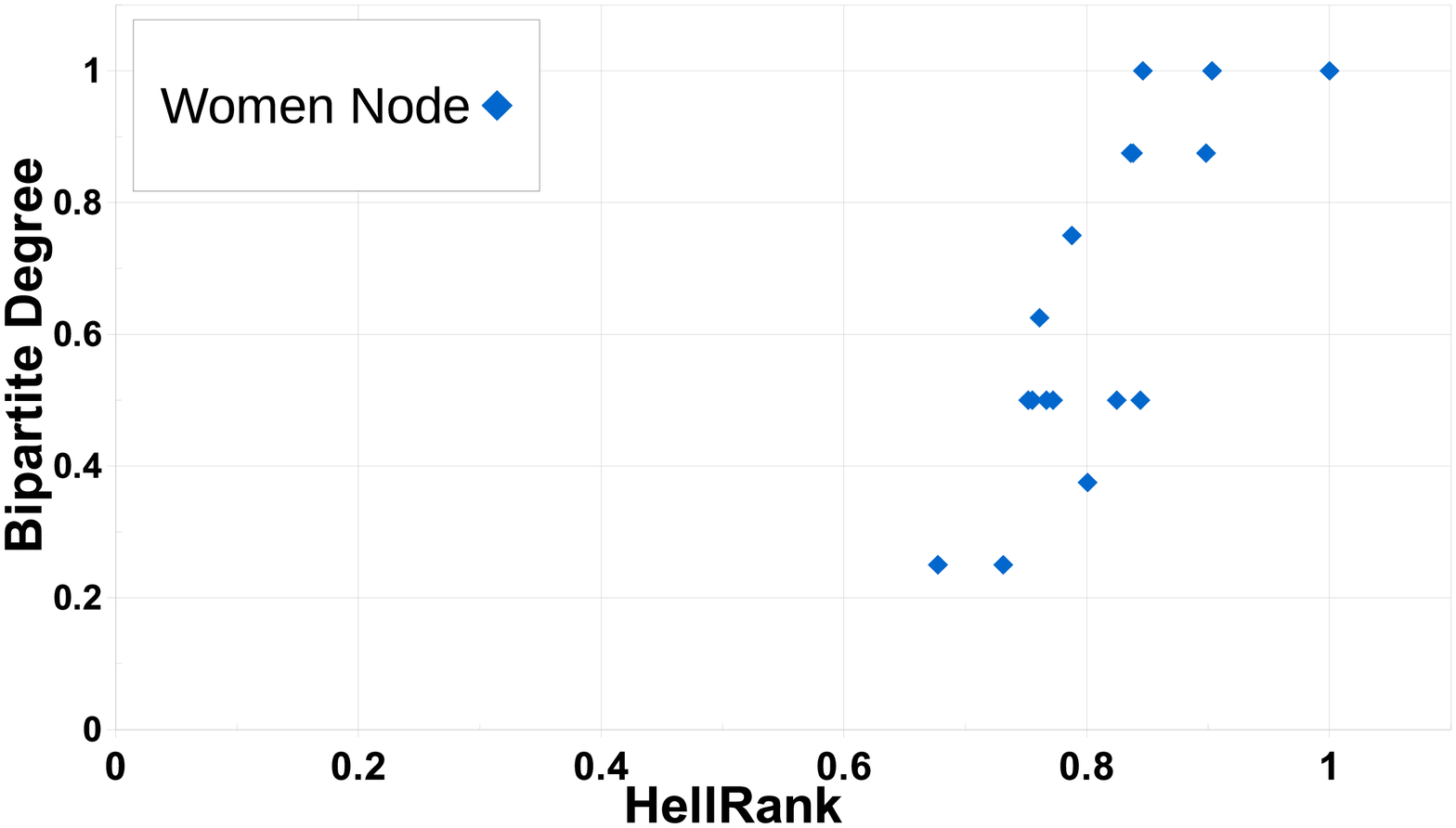}
\label{davis_vs_bipartiteDegree}}
\hfil
\subfloat[\small HellRank vs. bipartite Closeness]{\includegraphics[width=2.1in]{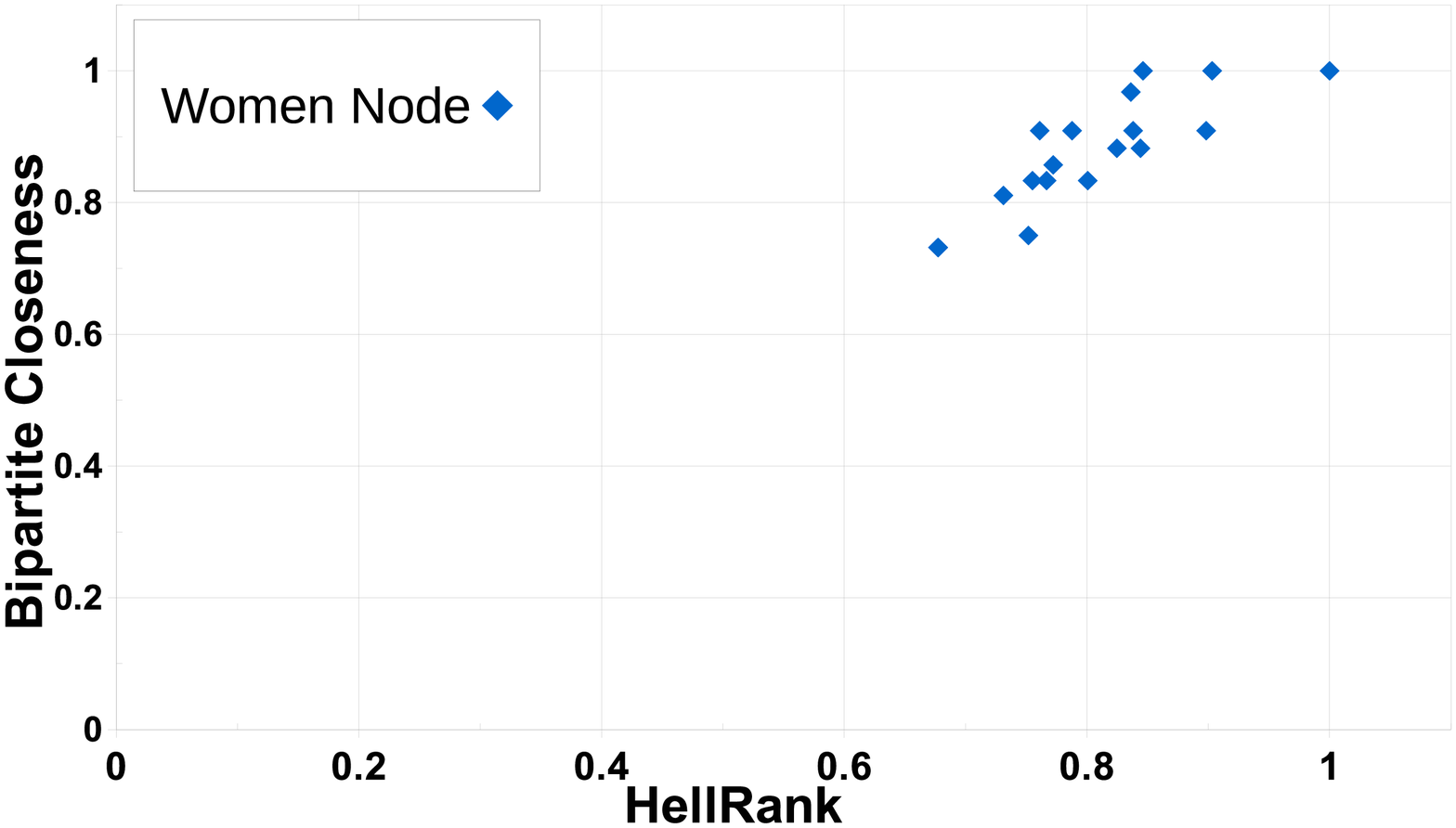}
\label{davis_vs_bipartiteCloseness}}
\hfil
\subfloat[\small HellRank vs. Latapy CC]{\includegraphics[width=2.1in]{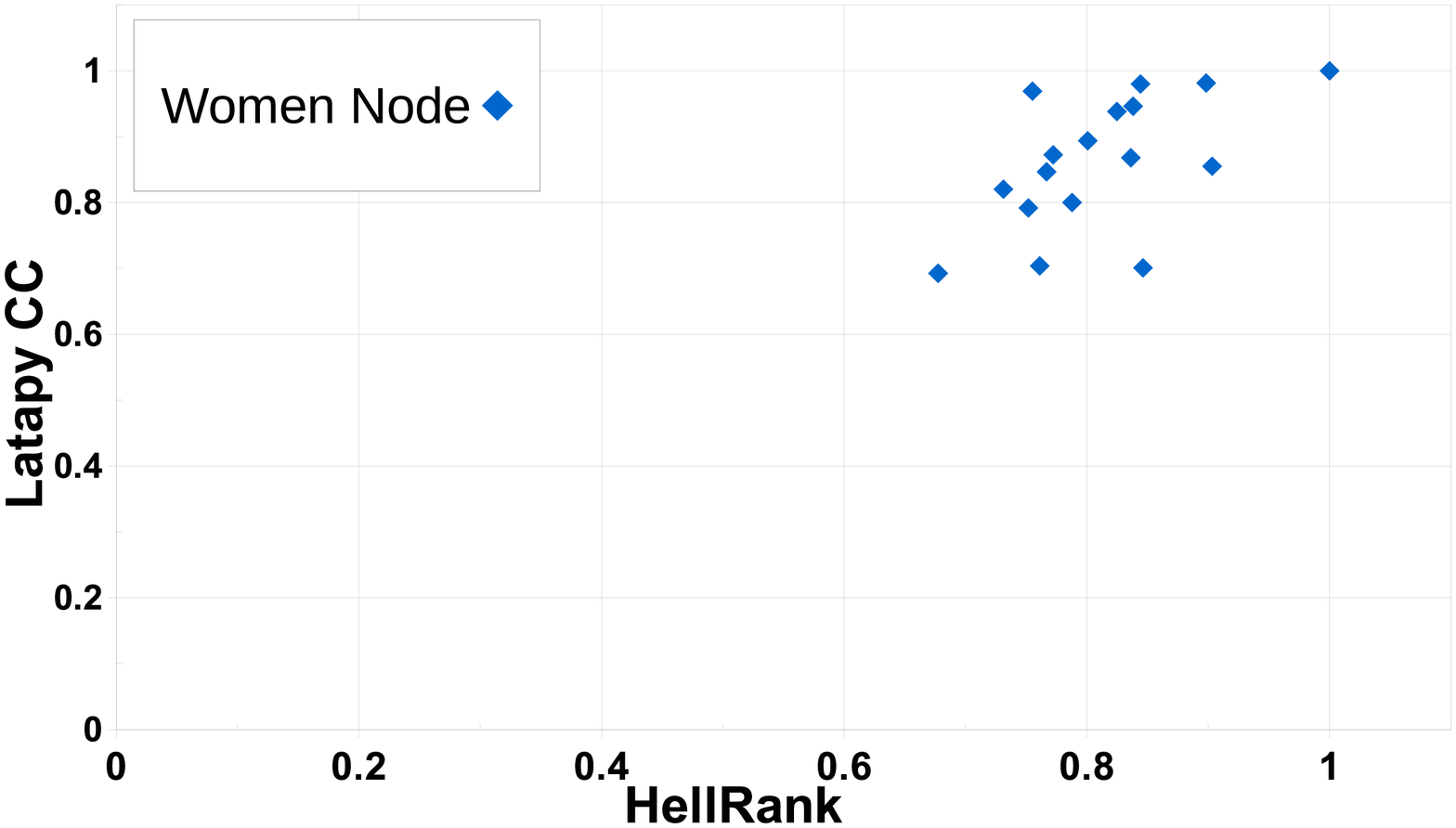}
\label{davis_vs_LatapyCC}}
\hfil
\subfloat[\small HellRank vs. PageRank]{\includegraphics[width=2.1in]{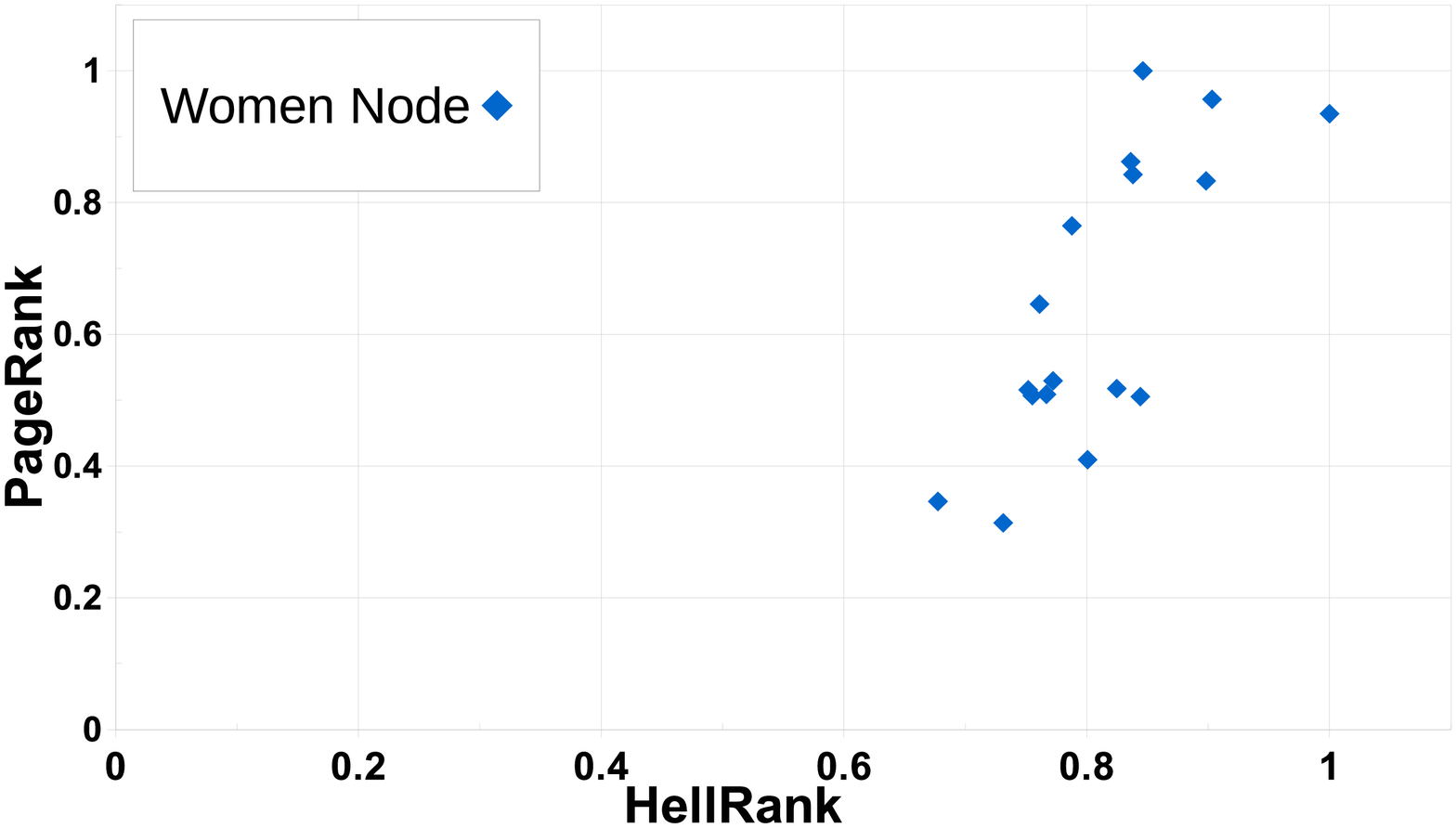}
\label{davis_vs_PageRank}}
\hfil
\subfloat[\small HellRank vs. projected Degree]{\includegraphics[width=2.1in]{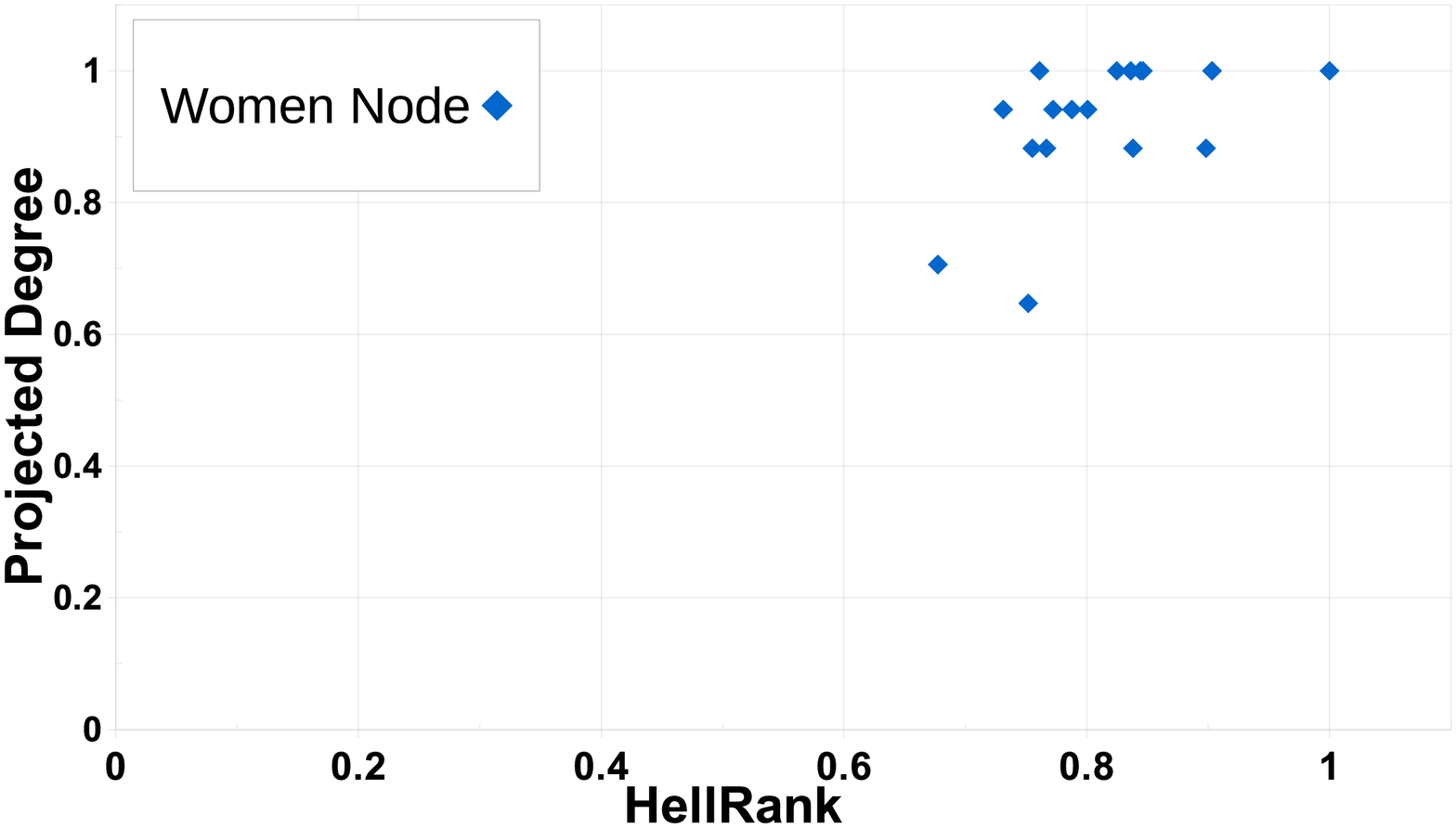}
\label{davis_vs_projected_Degree}}
\caption{\small The correlations between HellRank and the other standard centrality metrics on Davis (normalized by max value)}
\label{subfig_davis_cor}
\end{figure*}

\subsection{Correlation between HellRank and The Common Measures}
We implement our proposed HellRank measure using available tools in NetworkX \cite{Hagberg2008}. To compare our measure to the other common centrality measures such as Latapy Clustering Coefficient [Section \ref{CC}], Bipartite Degree [Section \ref{Degree}], Bipartite Closeness [Section \ref{Closeness}], Bipartite Betweenness [Section \ref{Betweenness}], and PageRank [Section \ref{PageRank}], we perform the tests on Southern Davis Women dataset. In Figure \ref{fig:compares}, we observe the obtained ratings of these metrics (normalized by maximum value) for 18 women in the Davis dataset. In general, approximate correlation can be seen between the proposed HellRank metric and the other conventional metrics in the women scores ranking. It shows that despite different objectives to identify the central users, there is a partial correlation between HellRank and the other metrics.
 
Figure \ref{subfig_davis_cor} shows scatter plots of standard metrics versus our proposed metric, on the Davis bipartite network. Each point in the scatter plot corresponds to a women node in the network. Across all former metrics, there exist clear linear correlations between each two measures. More importantly, because of the possibility of distributed computation of HellRank over the nodes, this metric can also be used in billion-scale graphs, while many of the most common metrics such as Closeness or Betweenness are limited to small networks \cite{Wehmuth2013}. We observe that high HellRank nodes have high bipartite Betweenness, bipartite Degree, and bipartite Closeness. This reflects that high HellRank nodes have higher chance to reach all nodes within short number of steps, due to its larger number of connections. In contrast with high HellRank nodes, low HellRank nodes have various Latapy CC and projected Degree values. This implies that the nodes which are hard to be differentiated by these measures can be easily separated by HellRank. 

To have a more analysis of the correlations between measures, we use Kendall between ranking scores provided by different methods in Table \ref{table:kendall_davis} and Spearman's rank correlation coefficient between top $k=5$ nodes in Table \ref{table:spearman_davis}. These tables illustrate the correlation between each pair in bipartite centrality measures and again emphasizes this point that despite different objectives to identify the central users, there is a partial correlation between HellRank and other common metrics.
\begin{table*}[t]
 \centering
\renewcommand{\arraystretch}{1.3}
\caption{\small Comparison ratings results based on Kendall score in women nodes of Davis dataset [(2) means bipartite measures and (1) means projected one-mode measures]}
\label{table:kendall_davis}
\centering
\begin{small}
\begin{tabular}{|c|c|c|c|c|c|c|c|c|}
\hline
Method &Latapy CC & Degree(2) &Betweenness(2)&Closeness(2) & PageRank &Degree(1) &Betweenness(1)&Closeness(1)\\
\hline
HellRank &0.51 &0.7  &0.67&0.74&0.59 &0.51&0.53&0.51\\
\hline
\end{tabular}
\end{small}
\end{table*}
 \begin{table*}[!t]
\renewcommand{\arraystretch}{1.3}
\caption{\small Comparison top $k=5$ important nodes based on  Spearman's correlation in women nodes of Davis dataset [(2) means bipartite and (1) means projected one-mode]}
\label{table:spearman_davis}
\centering
\begin{small}
\begin{tabular}{|c|c|c|c|c|c|c|c|c|}
\hline
Method &Latapy CC & Degree(2) &Betweenness(2)&Closeness(2)& PageRank &Degree(1) &Betweenness(1)&Closeness(1)\\
\hline
HellRank &0.44 &0.72&0.44&0.72&0.44 &0.44&0.44&0.44\\
\hline
\end{tabular}
\end{small}
\end{table*}
\begin{figure*}[!t]
\centering
\includegraphics[width=6in]{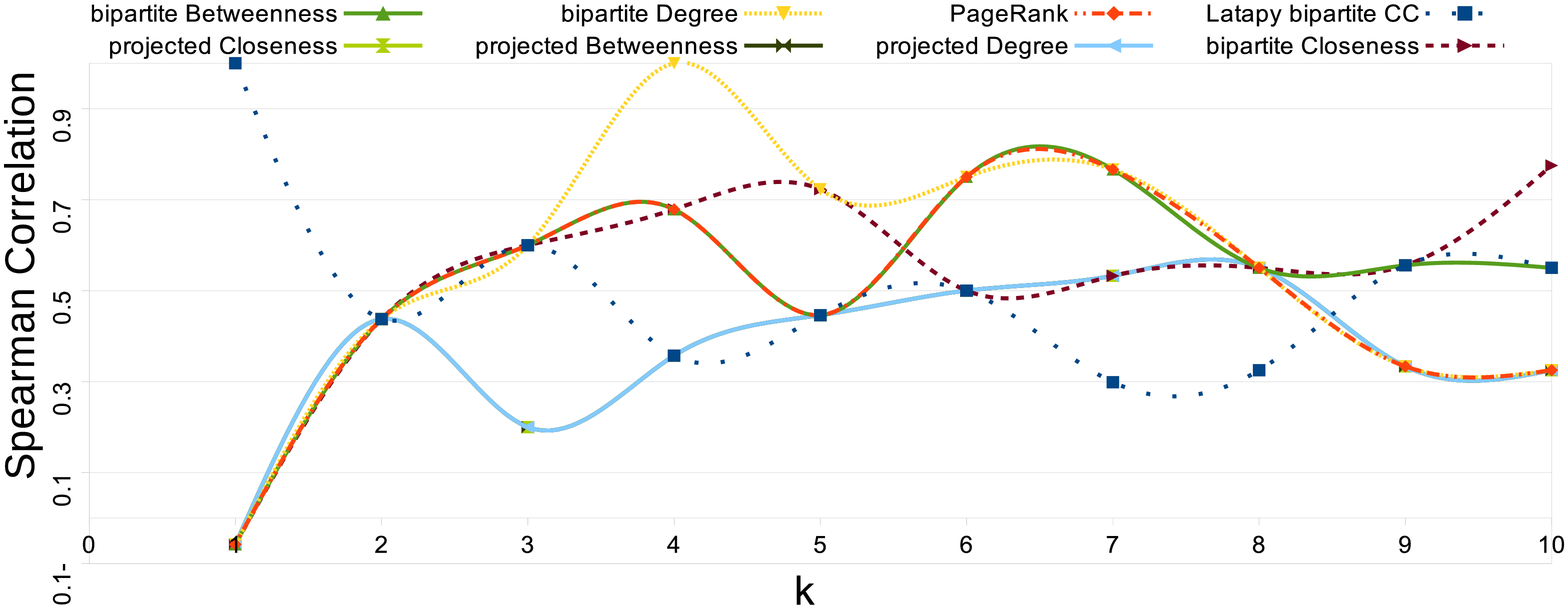}
\vspace{-10pt}
\caption{\small Spearman's rank correlation with different values of $k$ in Davis dataset}
\label{davis_corr_dif_K}
\end{figure*}

In the next experiment, we compare the top-$k$ central users rankings produced by Latapy CC, PageRank, Bipartite, and projected one-mode Betweenness, Degree, Closeness, and HellRank with different values of $k$. We employ Spearman's rank correlation coefficient measurement to compute the ranking similarity between two top-$k$ rankings. Figure \ref{davis_corr_dif_K} presents result of Spearman's rank correlation coefficient between the top-$k$ rankings of HellRank and the other seven metrics, in terms of different values of $k$. As shown in the figure,  the correlation values of top $k$ nodes in all rankings,  reach almost constant limit at a specific value of $k$. This certain amount of $k$ is approximately equal 4 for all metrics. This means that the correlation does not increase at a certain threshold for $k=4$ in the Davis dataset.

To evaluate the HellRank on a larger dataset, we repeated all the mentioned experiments for the arXiv cond-mat dataset. The scatter plots of standard metrics versus HellRank metric can be seen in Figure \ref{subfig_arxiv_cor}. The results show that there exist almost linear correlations between the two measures in Bipartite Betweenness, Bipartite Degree and PageRank. In contrast to these metrics, HellRank has not correlation with other metrics such as Bipartite Closeness, Latapy Clustering Coefficient and Projected Degree. This implies that nodes that are hard to be differentiated by these metrics, can be separated easily by HellRank metric. 

Moreover, Spearman's rank correlation with different values of $k$ in arXiv cond-mat dataset can be seen in Figure \ref{opsahl_corr_dif_K}. We observe the correlation values from top $k$ nodes in all rankings, with different values of $k$,  reach almost constant limit at a specific value of $k$. This certain amount of $k$ approximately equals to 1000 for all metrics except Bipartite Closenss and Latapy CC metrics. This means that the correlation does not increase at a certain threshold for $k=1000$ in the arXiv cond-mat dataset.

\begin{figure*}[!t]
\centering
\subfloat[\small HellRank vs. bipartite Betweenness]{\includegraphics[width=2.1in]{opsahl_vs_bipartiteBetweenness.eps}\label{opsahl_vs_bipartiteBetweenness}}
\hfil
\subfloat[\small HellRank vs. bipartite Degree]{\includegraphics[width=2.1in]{opsahl_vs_bipartiteDegree.eps}
\label{opsahl_vs_bipartiteCloseness}}
\hfil
\subfloat[\small HellRank vs. bipartite Closeness]{\includegraphics[width=2.1in]{opsahl_vs_bipartiteCloseness.eps}
\label{opsahl_vs_bipartiteCloseness}}
\hfil
\subfloat[\small HellRank vs. Latapy CC]{\includegraphics[width=2.1in]{opsahl_vs_LatapyCC.eps}
\label{fopsahl_vs_LatapyCC}}
\hfil
\subfloat[\small HellRank vs. PageRank]{\includegraphics[width=2.1in]{opsahl_vs_PageRank.eps}
\label{opsahl_vs_PageRank}}
\hfil
\subfloat[\small HellRank vs. projected Degree]{\includegraphics[width=2.1in]{opsahl_vs_projected_Degree.eps}
\label{opsahl_vs_projected_Degree}}
\caption{\small The correlations between HellRank and the other standard centrality metrics on Opsahl (normalized by max value)}
\vspace{7pt}
\label{subfig_arxiv_cor}
\end{figure*}

\begin{figure*}[t]
\centering
\includegraphics[width=6in]{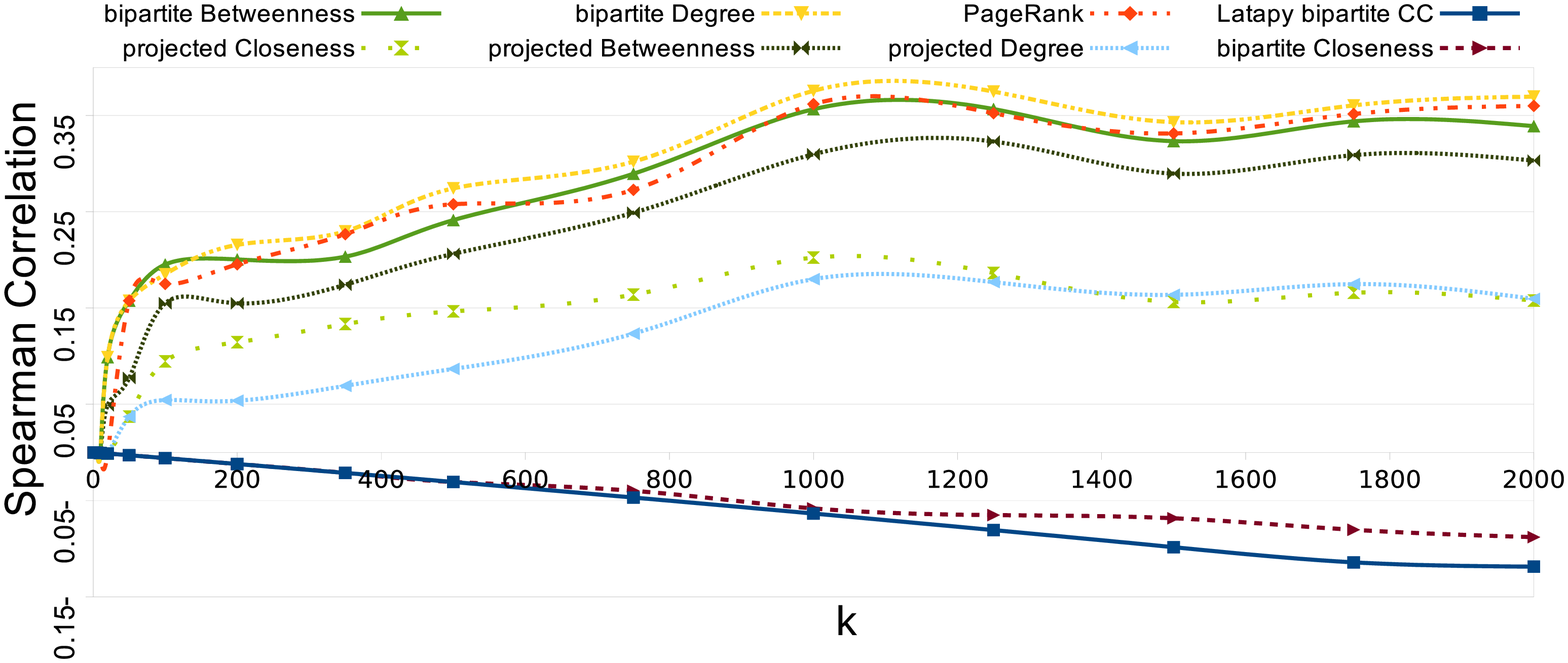}
\vspace{-10pt}
\caption{\small Spearman's rank correlation with different values of $k$ in arXiv cond-mat dataset}
\label{opsahl_corr_dif_K}
\end{figure*}

\subsection{Mapping the Hellinger Distance Matrix to the Euclidean Space}
Since we have a well-defined metric features and ability of mapping the Hellinger distance matrix to the Euclidean space, other experiment that can be done on this matrix,  is clustering nodes based on their distance. This Hellinger distance matrix can then be treated as a valued adjacency matrix\footnote{In a valued adjacency matrix, the cell entries can be any non-negative integer, indicating the strength or number of relations of a particular type or types \cite{koput2010}.} and visualized using standard graph layout algorithms. Figure \ref{map_euclidean} shows the result of such an analysis on Davis dataset. This figure is a depiction of Hellinger Distance for each pair of individuals, such that a line connecting two individuals indicates that their Hellinger distance are less than 0.50. The diagram clearly shows the separation of Flora and Olivia, and the bridging position of Nora. 
\begin{figure*}[t]
\centering
\includegraphics[width=5.2in]{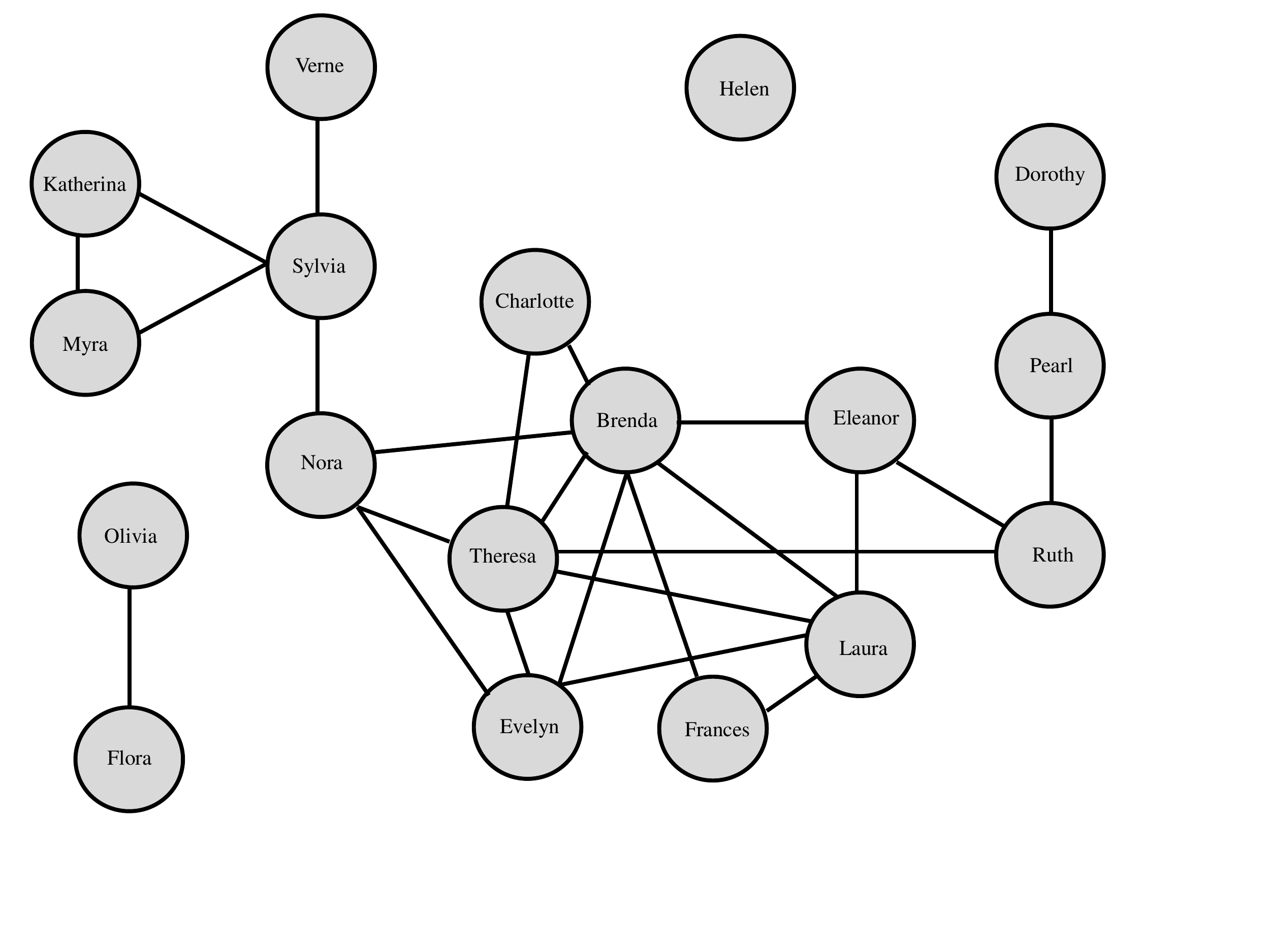}
\vspace{-25pt}
\caption{\small Mapping Hellinger distance matrix to Euclidean Space. A tie indicates that the distance between two nodes is lesser than 0.50}
\label{map_euclidean}
\end{figure*}

\section{Conclusion and Future Work}

In this paper, we proposed HellRank centrality measure for properly detection of more behavioral representative users in bipartite social networks. As opposed to previous work, by using this metric we can avoid projection of bipartite networks into one-mode ones, which makes it possible to take much richer information from the two-mode networks. The computation of HellRank can be distributed by letting each node uses only local information on its immediate neighbors. To improve the accuracy of HellRank, we can extend the neighborhood around each node. The HellRank centrality measure is based on the Hellinge distance between two nodes of the bipartite network and we theoretically find the upper and the lower bounds for this distance.

We experimentally evaluated HellRank on the Southern Women Davis dataset and the results showed that Brenda, Evelyn, Nora, Ruth, and Theresa should be considered as important women. Our evaluation analyses depicted that the importance of a woman does not only depend on her Degree, Betweenness, and Closeness centralities. For instance, if Brenda with low Degree centrality is removed from the network, the information would not easily spread among other women. As another observation, Dorothy, Olivia, and Flora have very low HellRank centralities. These results are consistent with the results presented in Bonacich (1978), Doreian (1979), and Everett and Borgatti (1993). 

As a future work, more meta data information can be taken into account besides the links in a bipartite network. Moreover, we can consider a bipartite network as a wighted graph \cite{mahyar2013ucswn,Mahyar2015LSRweighted} in which the links are not merely binary entities, either present or not, but have associated a given wight that record their strength relative to one another. Furthermore, as HellRank measure is proper for detection of more behavioral representative users in bipartite social network, we can use this measure in Recommender Systems. In addition, we can detect top $k$ central nodes in a network with indirect measurements and without full knowledge the network topological structure, using compressive sensing theory \cite{Mahyar2015TopK,mahyar2013ucsnt,mahyar2013ucswn,Mahyar2015CScomdet,Mahyar2015LSRweighted}.


\bibliographystyle{spmpsci}      
\bibliography{ref}

%
%

\end{document}